\documentclass[conference]{IEEEtran}
\makeatletter
\def\ps@headings{%
\def\@oddhead{\mbox{}\scriptsize\rightmark \hfil \thepage}%
\def\@evenhead{\scriptsize\thepage \hfil \leftmark\mbox{}}%
\def\@oddfoot{}%
\def\@evenfoot{}}
\makeatother
\pagestyle{empty}

\pdfobjcompresslevel=0

\usepackage{cite}
\usepackage{color,soul}
\usepackage[hyphens]{url}
\usepackage{graphicx}
\usepackage{enumerate}

\usepackage[utf8]{inputenc}

\usepackage{enumitem}

\graphicspath{{imgs/}}
\usepackage{amsmath}
\usepackage{amsthm}
\usepackage{amssymb}
\usepackage{multirow}
\usepackage{array}
\usepackage{booktabs} 
\usepackage{algorithm}
\usepackage[noend]{algpseudocode}

\DeclareMathOperator*{\argmin}{arg\,min}

\DeclareMathOperator*{\Minimize}{Minimize:}

\DeclareMathOperator*{\SubjectTo}{Subject\phantom{a}to:}
\DeclareMathOperator*{\supp}{\textsf{supp}}

\newcommand{\norm}[1]{\left\lVert#1\right\rVert} 
\newcommand{\abs}[1]{\left\lvert#1\right\rvert}

\newtheorem{definition}{\bf{Definition}}
\newtheorem{proposition}{\bf{Proposition}}
\newtheorem{theorem}{\bf{Theorem}}

\newtheorem{remark}{Remark}
\newcolumntype{L}{>{\centering\arraybackslash}m{4cm}}

\usepackage{bm}


\begin{document}

\title{\Huge{Multi-Model Resilient Observer\\ under False Data Injection Attacks}}

\author{\IEEEauthorblockN{
\textbf{Olugbenga Moses Anubi, Charalambos Konstantinou, Carlos A. Wong, Satish Vedula}
}
\IEEEauthorblockA{Department of Electrical and Computer Engineering, FAMU-FSU College of Engineering \\
Center for Advanced Power Systems, Florida State University \\
E-mail: \{oanubi, ckonstantinou, cwong, svedula\}@fsu.edu}
}

\maketitle

\begin{abstract}
In this paper, we present the concept of boosting the resiliency of optimization-based observers for cyber-physical systems (CPS) using auxiliary sources of information. Due to the tight coupling of physics, communication and computation, a malicious agent can exploit multiple inherent vulnerabilities in order to inject stealthy signals into the measurement process. The problem setting considers the scenario in which an attacker strategically corrupts portions of the data in order to force wrong state estimates which could have catastrophic consequences. The goal of the proposed observer is to compute the true states in-spite of the adversarial corruption. In the formulation, we use a measurement prior distribution generated by the auxiliary model to refine the feasible region of a traditional compressive sensing-based regression problem. A constrained optimization-based observer is developed using \textit{$l_1$}-minimization scheme. Numerical experiments show that the solution of the resulting problem recovers the true states of the system. The developed algorithm is evaluated through a numerical simulation example of the IEEE 14-bus system.
\end{abstract}


\textbf{\small{\textit{Index Terms}--- Resiliency, observer, cyber-physical systems, false data injection attacks.}}

\section{Introduction}

\IEEEPARstart{C}{yber}-physical systems (CPS) are engineered systems that are built from, and depend upon, the seamless integration of cyber and physical components. Hence, CPS are tightly integrated systems at all scales and levels that leverage information, communication, and computing systems to control a physical process in an autonomous, cooperative, intelligent, and flexible manner \cite{konstantinou2015cyber}. The decreasing cost of sensing, networking, and computation tools in the era of internet-of-things (IoT) has resulted in building complex CPS with new capabilities, reducing the cost of CPS operation, and having safer and more efficient systems. 

Many CPS applications are safety-critical systems in domains such as critical infrastructure (e.g., power grid systems), disaster monitoring, and healthcare environments. Therefore, it is of paramount importance to ensure overall stability of the physical process and avoid severe consequences. Towards that goal of maintaining normal operating conditions, a CPS is consistently monitored and controlled by data acquisition and control systems. CPS operators use measurements acquired from various sensors across the CPS infrastructure to estimate system state variables. These state estimates are critical since they are used to adjust the control of the physical space via management operations. 

In order to preserve the integrity and availability of state estimation routines in CPS-related applications, \emph{bad data detection (BDD)} mechanisms have been traditionally used to remove faulty and erroneous measurements \cite{wu2018bad}. However, recent studies have showed that judiciously falsified data can inject errors in state variables without being detected by BDD \cite{liu2011false, liang2017review, deng2017false}. Adversaries may launch such \emph{false data injection attacks (FDIAs)} able to bypass BDD functions by altering the measurements sent from the field sensing devices to the central estimation station \cite{hao2015sparse}. Furthermore, attackers may realize such FDIAs by hacking into sensors and meters or even infiltrate secondary channels of the supply chain in order to distort the measurements \cite{konstantinou2016case, konstantinou2017gps}. Fig. \ref{fig:SEattack_generic} presents a schematic of the state estimation routine under FDIA.

\begin{figure}
    \centering{\includegraphics[width=3.3in]{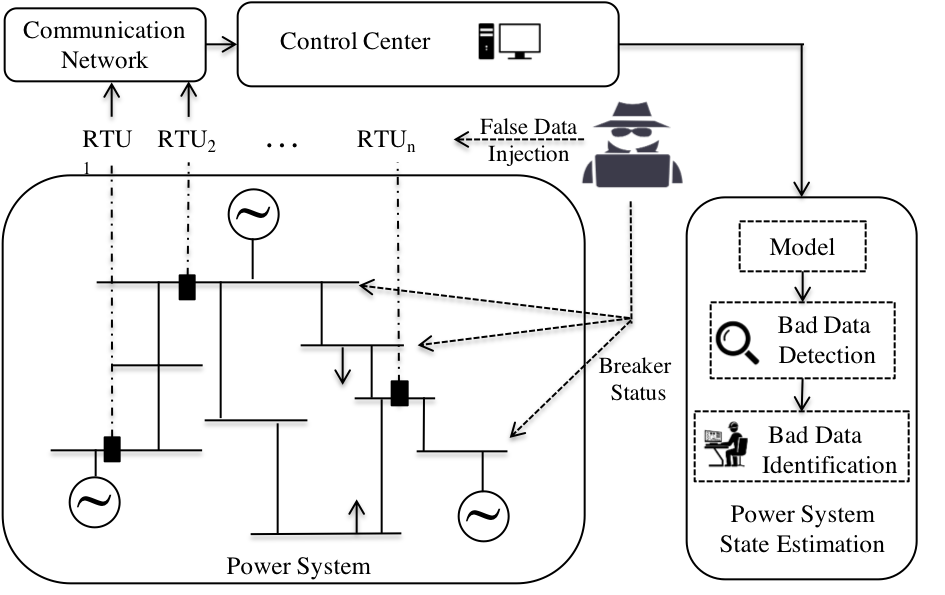}}
    \vspace{-1mm}
    \caption{State estimation under false data injection attacks (FDIAs).}
    \vspace{-2mm}
    \label{fig:SEattack_generic}
\end{figure}

Existing efforts to address the vulnerability of state estimation algorithms to FDIAs either require protection of a set of measurement sensors or verification of each state variable independently. The high computational and deployment cost, as well as significant risk involved with these approaches, have hampered their feasibility for use in practical real-time systems \cite{liang2017review, 9087789}. Furthermore, existing approaches are often developed for specific system configurations \cite{ashok2018online}. As a result, it is necessary to investigate more computationally feasible, adaptive and real-time implantable resiliency methods. 

In this work, we present an enhanced resilient state estimation approach for a dynamic CPS in which the data acquired from the sensing devices are poisoned with FDIAs. Our method relies on a data-driven model with traditional compressive sensing regression. Gaussian processes (GP) are a typical candidate for building generative probabilistic regression models from historical data \cite{anubi2019enhanced}. We demonstrate that our solution can recover the true states of the system, i.e., the system operation is able to withstand, adapt, and detect efficiently extreme adversarial FDIA settings. The developed algorithm is evaluated on a power system test case model. 

The reminder of the paper is organized as follows: in Section \ref{s:preliminaries} we provide necessary definitions and background for this work. Then, Section \ref{s:methodology} presents the formulation of the estimation problem as well as our proposed solution algorithm for the enhanced state estimator. Experimental details and simulation results are described in Section \ref{s:experiments}. Our concluding remarks are discussed in Section \ref{s:conclusions}.

\section{Notation}\label{s:notation}
The following notions and conventions are employed throughout the paper:
$\mathbb{R},\mathbb{R}^n,\mathbb{R}^{n\times m}$  denote the space of real numbers, real vectors of length $n$ and real matrices of $n$ rows and $m$ columns respectively.
$\mathbb{R}_+$ denotes positive real numbers.
$X^\top$ denotes the transpose of the quantity $X$.
By $Q\succeq0$, it is meant that $Q$ is a positive semi-definite symmetric matrix, i.e $\mathbf{x}^\top Q\mathbf{x}\ge0\hspace{1mm}\forall\mathbf{x}\neq0$ and $Q\succ0$ denotes positive definiteness which is defined with strict $>$ instead. 
Given $Q\succ0$, the $Q$-weighted norm is defined as $\norm{\mathbf{x}}_Q\triangleq\mathbf{x}^\top Q\mathbf{x}.$
Normal-face lower-case letters ($x\in\mathbb{R}$) are used to represent real scalars, bold-face lower-case letter ($\mathbf{x}\in\mathbb{R}^n$) represents vectors, while normal-face upper case ($X\in\mathbb{R}^{n\times m}$) represents matrices. Let $\mathcal{T}\subseteq\{1,\hdots,n\}$ then, for a matrix $X\in\mathbb{R}^{n\times m}$, $X_\mathcal{T} \in\mathbb{R}^{\abs{\mathcal{T}}\times m}$ is the submatrix obtained by extracting the rows of $X$ corresponding to the indices in $\mathcal{T}$. For a vector $\mathbf{x}$, $\mathbf{x}_i$ denotes its $i_{th}$ element. The support of a vector $\mathbf{x}\in\mathbb{R}^m$ is denoted by $\supp(\mathbf{x})\triangleq\left\{i\hspace{1mm}:\hspace{1mm}\mathbf{x}_i\neq0\right\}$, with $\abs{\supp(\mathbf{x})}\le m$ being the number of nonzero elements of $\mathbf{x}$. $\mathcal{S}_s^m\triangleq\left\{\mathbf{x}\in\mathbb{R}^m\}\hspace{1mm}:\hspace{1mm} 0<\abs{\supp(\mathbf{x})}\le s\right\}$ denotes the set of all nonzero $k$-sparse vectors.
Given a positive scalar $\varepsilon\in\mathbb{R}_+$, a saturation function $\textsf{sat}_\varepsilon:\mathbb{R}\mapsto[-\varepsilon,\hspace{1mm}\varepsilon]$ is given by
\begin{align*}
    \textsf{sat}_\varepsilon(x) = \left\{\begin{array}{rcl}-\varepsilon&\text{if}&x<-\varepsilon\\x&\text{if}&\abs{x}\le\varepsilon\\\varepsilon&\text{if}&x>\varepsilon\end{array}\right.
\end{align*}
A best $s_{th}$ term approximation of a vector $\mathbf{e}\in\mathbb{R}^m$ ($s\le m$) is denoted by $\mathbf{e}[s] \triangleq\min\limits_{\norm{\mathbf{f}}_0=s}\norm{\mathbf{e}-\mathbf{f}}_1$ .

\section{Preliminaries}\label{s:preliminaries}
In this section, for completeness of exposition and to facilitate faster comprehension of subsequent developments, we have gathered relevant results from literature that we built upon. 

\subsection{Overview of Resilient Estimators}
There are numerous work in literature on the secure estimation for CPS ~\cite{fawzi2014secure,hu2016secure,Fiore2017Secure, 8894484,Mishra2015Secure,Liu2017Secure, mestha2017cyber, anubi2018robust}. 
The majority of the previous research focuses on the LTI systems ranging from a Kalman filter as predictor and estimator, unconstrained $l_1$-minimization to solve the error problem, and the use of machine learning paradigms for feature discovery.
However, we focus only on the ones which are optimization based - since that is the approach we consider in this work. Moreover, due to sparsity assumption on the set of attacked nodes, majority of these works are based on the classical error correction problem \cite{candes2005decoding}. Let $\mathbb{R}^m\ni\mathbf{y} = C\mathbf{x} + \mathbf{e}$, where $C\in\mathbb{R}^{m\times n}$ is a coding matrix $(m>n)$, be a measurement vector corrupted by an arbitrary unknown but sparse error vector $\mathbf{e}$. By sparsity, we mean that $\left\|\mathbf{e}\right\|_{l_0}\le s<m$. The objective is to recover the input vector $\mathbf{x}\in\mathbb{R}^n$.
Assuming that the coding matrix $C$ is full rank, one can construct a matrix $F$ such that $FC=0$ and
\begin{align}
\tilde{\mathbf{y}} = F\mathbf{y} = F(C\mathbf{x}+\mathbf{e}) = F\mathbf{e}.
\end{align}
Thus the decoding problem is equivalent to reconstructing a sparse vector from the observation $\tilde{\mathbf{y}}=F\mathbf{e}$ and is cast as the compressive sensing problem:
\begin{align}\label{eqn:comp_sens}
\Minimize\limits_{\mathbf{e}}{\left\|\mathbf{e}\right\|_{l_0}}\hspace{2mm}\SubjectTo\hspace{2mm}\tilde{\mathbf{y}}=F\mathbf{e}.
\end{align}
Hayden et. al \cite{hayden2016sparse} obtained a sufficient condition that if all subsets of $2s$ columns of $F$ are full rank, then any error $\left\|\mathbf{e}\right\|_{l_0}\le s$ can be reconstructed uniquely by the solution of the optimization problem in \eqref{eqn:comp_sens}. 
Although in some cases~\cite{pajic2017design} the optimization problem in \eqref{eqn:comp_sens} is solved as is, in most cases, it does not lend itself to a solution in polynomial time due to its nonconvexity. As a result, it is often replaced with its convex neighbor:
\begin{align}\label{eqn:comp_sens_L1}
\Minimize\limits_{\mathbf{e}}{\left\|\mathbf{e}\right\|_{l_1}}\hspace{2mm}\SubjectTo\hspace{2mm}\tilde{\mathbf{y}}=F\mathbf{e}.
\end{align}
The two programs, however, have been shown to be equivalent under the condition that the \emph{restricted isometric property (RIP)} holds~\cite{donoho2003optimally,elad2002generalized,gribonval2003sparse,tropp2004greed}. 

\begin{definition}[RIP \cite{candes2005decoding}]
A matrix $A$ has the RIP of sparsity $k$ if there exists $0<\delta<1$ such that
\begin{align}
    \left(1-\delta\right)\norm{\mathbf{x}}_2^2\le\norm{A\mathbf{x}}_2^2\le\left(1+\delta\right)\norm{\mathbf{x}}_2^2
\end{align}
for all $\mathbf{x}\in\mathcal{S}_s$. Moreover, the smallest $\delta$ for which the above inequality holds is called the \emph{restricted isometry constant}, and denoted as $\delta_s(A)$.
\end{definition}
The above definition essentially requires that every set of columns with cardinality less that or equal to $s$ behaves like an orthonormal system. The following theorem lists the recovery error due to relaxed convex program above.

\begin{theorem}[\cite{candes2005decoding},\cite{cai2013sparse}]\label{thm:RIP}
Let $\mathbf{e}$ be a sparse vector satisfying $\tilde{\mathbf{y}}=F\mathbf{e}$ and $\hat{\mathbf{e}}$ be the solution of \eqref{eqn:comp_sens_L1}. If $\displaystyle \delta_{2s}(F)<\frac{1}{\sqrt{2}}$, then 

\begin{align}
\begin{split}
    \norm{\hat{\mathbf{e}} -\mathbf{e}}_2 &\le \\
    &\frac{2}{\sqrt{s}}\left(\frac{\delta_{2s}+\sqrt{\delta_{2s}\left(\frac{1}{\sqrt{2}}-\delta_{2s}\right)}}{\sqrt{2}\left(\frac{1}{\sqrt{2}}-\delta_{2s}\right)}+1\right) \norm{\mathbf{e}-\mathbf{e}[s]}_1,
    \end{split}
\end{align}

where $\mathbf{e}[s]$ is the best $s$-term approximation of $\mathbf{e}$.
\end{theorem}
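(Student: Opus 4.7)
The plan is to follow the $l_1$-analysis refined by Cai and Zhang, whose sharp threshold $1/\sqrt{2}$ is exactly the hypothesis invoked here. Let $\hat{\mathbf{e}}$ solve \eqref{eqn:comp_sens_L1} and set $\mathbf{h}\triangleq\hat{\mathbf{e}}-\mathbf{e}$. Since both $\hat{\mathbf{e}}$ and $\mathbf{e}$ are feasible, $F\mathbf{h}=0$, so the task reduces to controlling $\norm{\mathbf{h}}_2$ using only the $l_1$-optimality of $\hat{\mathbf{e}}$ and the RIP of $F$ of order $2s$.

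First I would derive the standard cone (tube) inequality. Let $T_0$ index the $s$ largest-magnitude entries of $\mathbf{e}$, so that $\mathbf{e}-\mathbf{e}[s]=\mathbf{e}_{T_0^c}$. Splitting $\norm{\hat{\mathbf{e}}}_1\le\norm{\mathbf{e}}_1$ across $T_0$ and $T_0^c$ and applying the reverse and forward triangle inequalities yields
\[
\norm{\mathbf{h}_{T_0^c}}_1 \le \norm{\mathbf{h}_{T_0}}_1 + 2\norm{\mathbf{e}-\mathbf{e}[s]}_1.
\]
This is the only place where the defining optimality of \eqref{eqn:comp_sens_L1} enters; from here the argument is purely geometric.

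Second, I would invoke the Cai--Zhang sparse-representation lemma to exploit $F\mathbf{h}=0$ at the sharp constant. The classical Candes block-partition argument (partition $T_0^c$ into chunks of size $s$ ordered by magnitude and apply RIP to each) only produces the threshold $\delta_{2s}<\sqrt{2}-1$. To reach $1/\sqrt{2}$ one instead writes the appropriately normalized tail $\mathbf{h}_{T_0^c}$ as a convex combination of $s$-sparse vectors $\mathbf{u}_i$ with a common infinity norm, applies the polarization identity to each $2s$-sparse pair $\mathbf{h}_{T_0}\pm\alpha\mathbf{u}_i$, and uses the two-sided RIP bound $(1-\delta_{2s})\norm{\cdot}_2^2\le\norm{F\cdot}_2^2\le(1+\delta_{2s})\norm{\cdot}_2^2$. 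Optimizing over the scaling $\alpha$ is precisely what produces the square-root factor $\sqrt{\delta_{2s}(1/\sqrt{2}-\delta_{2s})}$ in the stated constant and delivers an inequality of the form $\norm{\mathbf{h}_{T_0}}_2 \le K(\delta_{2s})\,\norm{\mathbf{h}_{T_0^c}}_1/\sqrt{s}$ with $K(\delta_{2s})$ matching the first summand inside the parenthetical factor of the theorem.

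Finally I would combine the pieces. The triangle inequality gives $\norm{\mathbf{h}}_2\le\norm{\mathbf{h}_{T_0}}_2+\norm{\mathbf{h}_{T_0^c}}_2$, and a standard Stechkin-type tail estimate on the dyadic-block decomposition of $T_0^c$ supplies $\norm{\mathbf{h}_{T_0^c}}_2\le\norm{\mathbf{h}_{T_0^c}}_1/\sqrt{s}$; this is what contributes the ``$+1$'' inside the parentheses. Substituting the tube condition then collapses every occurrence of $\norm{\mathbf{h}_{T_0^c}}_1$ onto $2\norm{\mathbf{e}-\mathbf{e}[s]}_1$ after absorbing $\norm{\mathbf{h}_{T_0}}_1\le\sqrt{s}\,\norm{\mathbf{h}_{T_0}}_2$ into the coefficient, producing the bound as stated. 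The main obstacle is the sharp constant itself: the block-partition route is conceptually much simpler but loses the $1/\sqrt{2}$ threshold, so the real effort lies in the Cai--Zhang convex-combination representation and in the careful optimization inside the polarization step that reproduces the exact algebraic expression in front of $\norm{\mathbf{e}-\mathbf{e}[s]}_1$.
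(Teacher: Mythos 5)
There is no in-paper proof to compare against: the paper imports this theorem verbatim from its citations (Cand\`es--Tao for the error-correction/RIP framework, Cai--Zhang for the sharp $\delta_{2s}<1/\sqrt{2}$ threshold and the stated constant), so the only meaningful benchmark is the Cai--Zhang argument itself --- and your outline is essentially a faithful reconstruction of it: the cone condition $\norm{\mathbf{h}_{T_0^c}}_1\le\norm{\mathbf{h}_{T_0}}_1+2\norm{\mathbf{e}-\mathbf{e}[s]}_1$ from $\ell_1$-optimality, the sparse-representation-of-a-polytope lemma in place of the classical block partition (which, as you say, only reaches $\delta_{2s}<\sqrt{2}-1$), the square/polarization identity on $2s$-sparse combinations with the two-sided RIP, and the optimization over the scaling that produces the $\sqrt{\delta_{2s}(1/\sqrt{2}-\delta_{2s})}$ term. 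That is the right route and the right attribution.

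Two technical points in your final assembly are stated imprecisely. First, $\norm{\mathbf{h}_{T_0^c}}_2\le\norm{\mathbf{h}_{T_0^c}}_1/\sqrt{s}$ is false in general (take $\mathbf{h}_{T_0^c}$ with a single nonzero entry); the Stechkin-type bound only controls the tail \emph{after} removing the largest $s$-block, or, in the Cai--Zhang write-up, one uses the interpolation $\norm{x}_2\le\sqrt{\norm{x}_1\norm{x}_\infty}$ together with the $\ell_\infty$ control on the tail entries supplied by the polytope construction. Second, that $\ell_\infty$ control requires indexing by the $s$ largest entries of $\mathbf{h}$ itself, not by $T_0=\supp(\mathbf{e}[s])$; the standard remedy is to transfer the cone condition via $\norm{\mathbf{h}_{-\max(s)}}_1\le\norm{\mathbf{h}_{T_0^c}}_1$ and $\norm{\mathbf{h}_{T_0}}_1\le\norm{\mathbf{h}_{\max(s)}}_1$ before invoking the lemma. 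Both are routine fixes that do not change the architecture of your argument or the resulting constant.
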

\begin{remark}
If $\mathbf{e}\in\mathcal{S}_s$, then $\hat{\mathbf{e}}=\mathbf{e}$. Thus, if $\displaystyle \delta_{2s}(F)<\frac{1}{\sqrt{2}}$ the relaxed program in \eqref{eqn:comp_sens_L1} will recover any $s$-sparse vector $\mathbf{e}\in\mathcal{S}_s$ exactly! 
\end{remark}

Now, consider the discrete LTI system
\begin{align}
\mathbf{x}_{k+1}&=A\mathbf{x}_k\\
\mathbf{y}_k& = C\mathbf{x}_k + \mathbf{e}_k,
\end{align}
where $\mathbf{x}_k\in\mathbb{R}^n$ represents the state of the system at time $k\in\mathbb{N}$, $\mathbf{y}_k\in\mathbb{R}^m$ is the output of the monitoring nodes at time $k$ and $\mathbf{e}_k\in\mathbb{R}^m$ denote the attack signals injected by malicious agents at the monitoring nodes. Let $\mathcal{K} \subset \{1,2,\hdots,m\}$ denote the set of attacked nodes, then for all $k$, $\abs{\supp(\mathbf{e}_k)}\subset\mathcal{K}$. The resilient estimation problem is then defined as reconstructing the initial state $\mathbf{x}_0$ from corrupt measurement $\left\{\mathbf{y}_k\right\}_{k=0}^T, T\in\mathbb{N}$. We look at two scenarios from literature: $\mathcal{K}$ is time-invariant\cite{fawzi2014secure,pasqualetti2013attack} and $\mathcal{K}$ is time-varying\cite{hu2016secure}.
\subsubsection{Secure estimation for fixed attacked nodes\cite{fawzi2014secure}}
Assuming that the set $\mathcal{K}$ of attacked nodes is time-invariant:
\begin{definition}\label{def:K_invariant}
$s$ errors are correctable after $T$ steps by the decoder $\mathcal{D}:\left(\mathbb{R}^m\right)^T\mapsto\mathbb{R}^n$ if for any $\mathbf{x}_0\in\mathbb{R}^n$, any $\mathcal{K}\subset\left\{1,2,\hdots,m\right\}$ with $\abs{\mathcal{K}}\le s$, and any sequence of vectors $\mathbf{e}_0,\hdots,\mathbf{e}_{T-1}\in\mathbf{R}^m$ such that $\supp(\mathbf{e}_k)\subset\mathcal{K}$, we have $\mathcal{D}(\mathbf{y}_0,\hdots,\mathbf{y}_{T-1})=\mathbf{x}_0$, where $\mathbf{y}_k = CA^k\mathbf{x}_0 + \mathbf{e}_k$ for $k=0,1,\hdots,T-1$.
\end{definition}
\begin{proposition}
Let $T \in \mathbb{N}  \backslash \{ 0\}$. The following are equivalent:\\
(i) There is a decoder that can correct $q$ errors after $T$ steps;\\
(ii) For all $\mathbf{z}\in \mathbb{R}^n \backslash \{0\}$, $\lvert \supp(C\mathbf{z}) \cup \supp(CA\mathbf{z}) \cup \cdots \cup \supp(CA^{T-1} \mathbf{z}) \rvert > 2s$.
\end{proposition}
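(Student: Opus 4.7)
The plan is to prove the two implications separately, treating this as an equivalence between decodability and a linear-algebraic distinguishability condition in the spirit of Fawzi et al.

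First, I would tackle $(ii) \Rightarrow (i)$ by explicitly constructing a (brute-force) decoder and arguing uniqueness. Define $\mathcal{D}(\mathbf{y}_0,\ldots,\mathbf{y}_{T-1})$ to be any $\mathbf{x}\in\mathbb{R}^n$ such that there exists $\mathcal{K}\subseteq\{1,\ldots,m\}$ with $|\mathcal{K}|\le s$ and $\supp(\mathbf{y}_k - CA^k\mathbf{x})\subseteq\mathcal{K}$ for all $k=0,\ldots,T-1$. By construction the true $\mathbf{x}_0$ is a valid choice, so $\mathcal{D}$ is well-defined. To establish uniqueness, suppose two candidates $\mathbf{x}_0,\mathbf{x}_0'$ satisfy the defining property with attack-support sets $\mathcal{K},\mathcal{K}'$ of size at most $s$. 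Subtracting the two measurement equations yields $CA^k(\mathbf{x}_0 - \mathbf{x}_0') = \mathbf{e}_k' - \mathbf{e}_k$, so $\supp(CA^k\mathbf{z}) \subseteq \mathcal{K}\cup\mathcal{K}'$ with $\mathbf{z}\triangleq\mathbf{x}_0 - \mathbf{x}_0'$. Taking the union over $k$ bounds $|\bigcup_k \supp(CA^k\mathbf{z})|\le 2s$, which by $(ii)$ forces $\mathbf{z}=0$, i.e., $\mathbf{x}_0=\mathbf{x}_0'$.

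For $(i) \Rightarrow (ii)$, I would argue the contrapositive. Suppose there exists $\mathbf{z}\in\mathbb{R}^n\setminus\{0\}$ with $|\bigcup_{k=0}^{T-1}\supp(CA^k\mathbf{z})|\le 2s$. Partition this union into two disjoint sets $\mathcal{K}_1,\mathcal{K}_2$ with $|\mathcal{K}_j|\le s$. For each $k$, decompose $CA^k\mathbf{z} = \mathbf{v}_k - \mathbf{w}_k$ where $\mathbf{v}_k$ equals $CA^k\mathbf{z}$ on $\mathcal{K}_1$ and zero elsewhere, and $\mathbf{w}_k$ equals $-CA^k\mathbf{z}$ on $\mathcal{K}_2$ and zero elsewhere. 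Now consider two scenarios: $(\mathbf{x}_0,\mathbf{e}_k)=(0,\mathbf{v}_k)$ versus $(\mathbf{x}_0',\mathbf{e}_k')=(\mathbf{z},\mathbf{w}_k)$. Both satisfy $\supp(\mathbf{e}_k)\subseteq\mathcal{K}_1$ and $\supp(\mathbf{e}_k')\subseteq\mathcal{K}_2$ respectively, and by construction
\begin{equation*}
CA^k\cdot 0 + \mathbf{v}_k \;=\; CA^k\mathbf{z} + \mathbf{w}_k,
\end{equation*}
so the two scenarios produce identical measurement sequences. Since $\mathbf{z}\neq0$, no decoder can map this common measurement sequence to both $0$ and $\mathbf{z}$, violating Definition~\ref{def:K_invariant}.

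The main obstacle I anticipate is the bookkeeping in the contrapositive direction, specifically ensuring that the decomposition $CA^k\mathbf{z} = \mathbf{v}_k - \mathbf{w}_k$ is consistent with two valid attack sets $\mathcal{K}_1,\mathcal{K}_2$ of cardinality at most $s$. The key observation is that it suffices to split the \emph{cumulative} support $\bigcup_k\supp(CA^k\mathbf{z})$ once into two halves of size $\le s$ and then reuse this split for every time index; this works precisely because the hypothesis bounds the union (not each term separately). Once this partition is fixed, the rest is a routine verification that the two constructed measurement sequences are identical, and that the assumed existence of a correcting decoder would produce a contradiction.
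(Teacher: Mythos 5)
Your proof is correct: the brute-force decoder for the $(ii)\Rightarrow(i)$ direction plus the support-splitting construction of two indistinguishable scenarios for the contrapositive of $(i)\Rightarrow(ii)$ is exactly the standard argument for this result, which the paper itself only states as background (citing Fawzi et al.) without giving a proof. The only cosmetic issue is the paper's own mismatch between ``$q$ errors'' in item (i) and the bound $2s$ in item (ii), which you rightly treat as the same quantity $s$.
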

Consequently, the following optimal decoder is defined for when the set of attacked nodes is fixed:
\begin{align}
\mathbf{x}_0 = \argmin\limits_{\mathbf{x}} \norm { Y_T - \Phi_T (\mathbf{x}) }_{l_0}
\end{align}
where
\begin{align*}
Y_T = \left[\begin{array}{c|c|c|c} \mathbf{y}_0 & \mathbf{y}_1 & \hdots &  \mathbf{y}_{T-1} \end{array}\right] \in \mathbb{R}^{m\times T}
\end{align*}
and $\Phi_T : \mathbb{R}^n\mapsto\mathbb{R}^{m\times T}$ is a linear map given by:
\begin{align*}
\Phi_T (\mathbf{x}) = \left[\begin{array}{c|c|c|c} C\mathbf{x} & CA\mathbf{x}  & \hdots & CA^{T-1} \mathbf{x} \end{array}\right] \in \mathbb{R}^{m\times T}.
\end{align*}
\subsubsection{Secure estimation for varying attacked nodes\cite{hu2016secure}}
Assuming that the set $\mathcal{K}$ of attacked nodes can change with time but bounded as in $\abs{\mathcal{K}}\le s$:
\begin{definition}\label{def:K_variant}
$q$ errors are correctable after $T$ steps by the decoder $\mathcal{D}:\left(\mathbb{R}^m\right)^T\mapsto\mathbb{R}^n$ if for any $\mathbf{x}_0 \in \mathbb{R}^n$ and any sequence of vectors $\mathbf{e}_0,\hdots,\mathbf{e}_{T-1}\in\mathbf{R}^m$ such that $\abs{\textsf{supp}(\mathbf{e}_k)} \leq s$, we have $\mathcal{D}(\mathbf{y}_0,\hdots,\mathbf{y}_{T-1})=\mathbf{x}_0$, where $\mathbf{y}_k = CA^k\mathbf{x}_0 + \mathbf{e}_k$ for $k=0,1,\hdots,T-1$.
\end{definition}
\begin{proposition}
Let $T \in \mathbb{N}  \backslash \{ 0\}$. The following are equivalent:\\
(i) There is a decoder that can correct $s$ errors after $T$ steps;\\
(ii) For all $\mathbf{z}\in \mathbb{R}^n \backslash \{0\}$ , $\sum\limits_{k=0}^{T-1}\abs{\textsf{supp}(CA^k \mathbf{z})} > 2s$.
\end{proposition}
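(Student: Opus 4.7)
The plan is to adapt the template of the preceding proposition for the fixed-support case, replacing the union-of-supports condition with the sum-of-supports condition that accounts for the additional degree of freedom the attacker has when the attack set is allowed to vary over time. In both directions the core ingredient is linearity of the output map together with sub-additivity of the support map: $\abs{\supp(\mathbf{u}+\mathbf{v})} \le \abs{\supp(\mathbf{u})} + \abs{\supp(\mathbf{v})}$. The organising observation is that two trajectories with initial conditions $\mathbf{x}_0,\mathbf{x}_0'$ and admissible attack sequences $\{\mathbf{e}_k\},\{\mathbf{e}_k'\}$ produce identical measurement streams if and only if $CA^k\mathbf{z} = \mathbf{e}_k - \mathbf{e}_k'$ for $k=0,\ldots,T-1$, where $\mathbf{z} \triangleq \mathbf{x}_0 - \mathbf{x}_0'$.

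For the direction (ii)$\Rightarrow$(i), I would exhibit the batch $\ell_0$-minimising decoder $\hat{\mathbf{x}}_0 = \argmin_{\mathbf{x}\in\mathbb{R}^n} \sum_{k=0}^{T-1} \abs{\supp(\mathbf{y}_k - CA^k\mathbf{x})}$. The true initial state achieves objective value $\sum_k \abs{\supp(\mathbf{e}_k)} \le s$, so the minimiser does as well. If the minimiser $\hat{\mathbf{x}}_0$ were distinct from $\mathbf{x}_0$, then writing $\mathbf{z}=\hat{\mathbf{x}}_0-\mathbf{x}_0$ and $\hat{\mathbf{e}}_k = \mathbf{y}_k - CA^k\hat{\mathbf{x}}_0$, the identity $CA^k\mathbf{z} = \mathbf{e}_k - \hat{\mathbf{e}}_k$ combined with support sub-additivity would give $\sum_{k=0}^{T-1}\abs{\supp(CA^k\mathbf{z})} \le \sum_k\abs{\supp(\mathbf{e}_k)} + \sum_k\abs{\supp(\hat{\mathbf{e}}_k)} \le 2s$, contradicting (ii). Hence $\hat{\mathbf{x}}_0=\mathbf{x}_0$ and the decoder corrects $s$ errors.

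For the direction (i)$\Rightarrow$(ii), I would argue the contrapositive. Given a witness $\mathbf{z}\ne 0$ with $\sum_k\abs{\supp(CA^k\mathbf{z})}\le 2s$, I would greedily partition the at-most-$2s$ aggregate nonzero coefficients of the sequence $\{CA^k\mathbf{z}\}_{k=0}^{T-1}$ into two groups of size at most $s$ each, inducing a decomposition $CA^k\mathbf{z} = \mathbf{a}_k + \mathbf{b}_k$ on disjoint supports with $\sum_k\abs{\supp(\mathbf{a}_k)}\le s$ and $\sum_k\abs{\supp(\mathbf{b}_k)}\le s$. The two admissible scenarios $(\mathbf{x}_0,\mathbf{e}_k)=(\mathbf{0},\mathbf{a}_k)$ and $(\mathbf{x}_0',\mathbf{e}_k')=(\mathbf{z},-\mathbf{b}_k)$ then yield the identical measurement sequence $\mathbf{y}_k=\mathbf{a}_k$, so no decoder can recover $\mathbf{x}_0$ uniquely, violating (i).

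The main obstacle is justifying the splitting step in the contrapositive, which rests on the interpretation—enforced by the sum appearing in (ii)—that the attacker's sparsity budget is measured cumulatively across the horizon $[0,T-1]$ rather than per time step; once this is fixed, the partition is combinatorially immediate since any multiset of cardinality at most $2s$ splits into two halves of cardinality at most $s$. A secondary care point is the possible non-uniqueness of the $\ell_0$-argmin in the opposite direction, but since the argument only needs the minimum value to be attained, any representative suffices.
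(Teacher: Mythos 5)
The paper never proves this proposition itself (it is imported verbatim from the cited source), so the only thing to check you against is the statement together with Definition~2. Your necessity direction, (i)$\Rightarrow$(ii) by contraposition, is correct and robust: splitting the at-most-$2s$ aggregate nonzeros of $\{CA^k\mathbf{z}\}$ into two admissible attack scenarios that produce identical outputs works under either reading of the sparsity budget, since a cumulative bound of $s$ in particular implies the per-step bound. The genuine gap is in your sufficiency direction, and you flagged the crux yourself: you resolve the ambiguity by declaring the attacker's budget cumulative, $\sum_{k}\abs{\supp(\mathbf{e}_k)}\le s$ over the horizon. But Definition~2 (and the sentence introducing it, ``$\abs{\mathcal{K}}\le s$'' with $\mathcal{K}$ allowed to change in time) imposes the per-step bound $\abs{\supp(\mathbf{e}_k)}\le s$. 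Under that definition your key step ``the true initial state achieves objective value $\sum_k\abs{\supp(\mathbf{e}_k)}\le s$'' fails: the bound is only $sT$, and the sub-additivity chain then yields $\sum_k\abs{\supp(CA^k\mathbf{z})}\le 2sT$, which does not contradict (ii).

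This is not a presentational issue you can patch, because under the per-step model (ii) genuinely does not imply (i). Take $n=1$, $m=2$, $A=1$, $C=(1,\,1)^\top$, $s=1$, $T=2$: for every $z\neq0$ one has $\abs{\supp(Cz)}+\abs{\supp(CAz)}=4>2s$, yet the scenarios $x_0=z$ with $\mathbf{e}_k=(-z,\,0)^\top$ and $x_0'=0$ with $\mathbf{e}_k'=(0,\,z)^\top$ each attack only one channel per step and generate the identical output $(0,\,z)^\top$ at both steps, so no decoder corrects one error per step. With per-step sparsity the correct equivalent condition is that for every $\mathbf{z}\neq0$ some single step already has $\abs{\supp(CA^k\mathbf{z})}>2s$ (a maximum over $k$, not a sum); the sum in (ii) is matched precisely to the cumulative budget you adopted. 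So what you wrote is a correct and essentially standard proof ($\ell_0$ decoder plus support sub-additivity, splitting construction for necessity) of the proposition under the cumulative-budget definition, but as a proof of the proposition paired with Definition~2 as printed, the (ii)$\Rightarrow$(i) direction is unproved and in fact false; either the definition or condition (ii) must be amended, a discrepancy the paper inherits from its source rather than resolves.
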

Consequently, the following optimal decoder is defined for when the set of attacked nodes is not fixed:
\begin{align}\label{eqn:decoder_var_attack}
\mathbf{x}_0 = \argmin\limits_{\mathbf{x}} \norm { \mathbf{y}_{(T)} - \Phi_{(T)} \mathbf{x} }_{l_0}
\end{align}
where
\begin{align*}
\mathbf{y}_{(T)} = \left[\begin{array}{c} \mathbf{y}_0\\ \mathbf{y}_1 \\ \vdots \\  \mathbf{y}_{T-1} \end{array}\right] \in \mathbb{R}^{m T},
\end{align*}
\begin{align*}
\Phi_{(T)}  = \left[\begin{array}{c} C\\ CA \\ \vdots \\ CA^{T-1} \end{array}\right] \in \mathbb{R}^{mT \times n}.
\end{align*}

\section{Resilient Observer Development}\label{s:methodology}
Consider the concurrent models

\begin{align}\label{eqn:model_based}
\begin{array}{r}
\mathbf{x}_{k+1}\\
\mathbf{y}_k
\end{array}&\begin{array}{c}=\\=\end{array}\begin{array}{l}A\mathbf{x}_k + B\mathbf{u}_k\\C\mathbf{x}_k+\mathbf{e}_k\end{array}\\\label{eqn:data_driven}
    \mathbf{y}_k &\sim\mathcal{N}(\boldsymbol{\mu}(\mathbf{z}_k),\Sigma(\mathbf{z}_k))
\end{align}

\noindent consisting of a physics-based model \eqref{eqn:model_based} and a data-driven prior \eqref{eqn:data_driven} given as a function of the auxiliary variable $\mathbf{z}\in\mathbb{R}^p$.
The data-driven model in \eqref{eqn:data_driven} gives a prior distribution on the system measurements as a function of measured auxiliary variables $\mathbf{z}_k\in\mathbb{R}^p$. This provides additional layer of security by: \textit{1)} requiring the attacker to have knowledge of the auxiliary model and the parameters, and \textit{2)} limiting the magnitude of possible state corruption. For a more detailed explanation of the advantages of the concurrent models in \eqref{eqn:model_based} and \eqref{eqn:data_driven}, as well as the resulting theoretical limits on the size of feasible attacks, interested readers are referred to the references \cite{anubi2019enhanced,anubi2019resilient} and the references therein.

Let $Y_k\triangleq\left\{\mathbf{y}_k,\mathbf{y}_{k-1},\hdots,\mathbf{y}_{k-T+1}\right\}\subset \mathbb{R}^m$ and $U_{k-1}\triangleq\left\{\mathbf{u}_{k-1},\mathbf{u}_{k-2},\hdots,\mathbf{u}_{k-T+1}\right\}\subset\mathbb{R}^l$ be collections of the last $T$-samples of the system known input and output measurements respectively. The proposed resilient observer attempts to solve the following moving horizon optimization problem for all time instant $k\ge T$:

\begin{align}\label{eqn:enh_res_est}
\begin{array}{ll}
\Minimize & \sum\limits_{i=k-T+1}^{k}{\left\|\mathbf{y}_i-C\mathbf{x}_i\right\|_{l_0}} \\
\SubjectTo&\\
        &\begin{array}{l}
          \mathbf{x}_{i+1}-A\mathbf{x}_i - B\mathbf{u}_i=0,\\\hspace{1cm}i=k-T+1,\hdots,k-1\\
          C\mathbf{x}_k\in\mathcal{Y}(\mathbf{z}_k)
        \end{array}
\end{array}
\end{align}
where the convex set $\mathcal{Y}(\mathbf{z})$ has the property that:
\begin{align}
    p(\mathbf{y_k}^*\in\mathcal{Y}|\mathbf{z}_k,\mathcal{D})\ge\tau.
\end{align}

\noindent More insight is provided in Theorem~\ref{thm:thm2}. The idea is essentially seeking historical and current state vectors, together with the minimum attacked channels, which completely explains the observations while satisfying the physics-based model and having a high likelihood according to the auxiliary model prior. The optimization parameter $\tau\in(0,\hspace{2mm}1]$ controls the likelihood threshold. It can be set to a constant value or optimized with respect to some higher-level objectives. Thus, the resilient observer optimization problem is equivalent to:

\begin{align}\label{eqn:enh_res_est2}
\begin{array}{ll}
\Minimize & \sum\limits_{i=k-T+1}^{k}{\left\|\mathbf{y}_{i}-C\mathbf{x}_{i}\right\|_{\ell_0}} \\
\SubjectTo&\\
        &\begin{array}{l}
          \mathbf{x}_{i+1}-A\mathbf{x}_{i} - B\mathbf{u}_{i}=0,\\
          \norm{C\mathbf{x}_k-\boldsymbol{\mu}(\mathbf{z}_k)}_{\Sigma^{-1}(\mathbf{z})}^2 \le \chi^2_{m}(\tau)
        \end{array}
\end{array}
\end{align}

\noindent where $\chi^2_{m}(\tau)$ is the quantile function for probability $\tau$ of the chi-squared distribution with $m$ degrees of freedom.

However, the nonconvexity due to the index minimization objective makes the optimization problem in \eqref{eqn:enh_res_est2} challenging, at best, for gradient-based solution algorithms. This will make it difficult, if not impossible, to synthesize a pragmatic algorithm that can be implemented real-time for the observer. Thus, we seek convex approximation alternatives. Fortunately, as discussed in the preliminaries Section \ref{s:preliminaries}, it is possible to approximate the index minimization objective using an $\ell_1$-norm without loosing global optimality -- provided the RIP condition holds. Consequently, the proposed resilient multi-model observer is given via the following convex program:
\begin{align}\label{eqn:enh_res_est_convex}
\begin{array}{ll}
\Minimize & \sum\limits_{i=k-T+1}^{k}{\left\|\mathbf{y}_{i}-C\mathbf{x}_{i}\right\|_{\ell_1}} \\
\SubjectTo&\\
        &\begin{array}{l}
          \mathbf{x}_{i+1}-A\mathbf{x}_{i} - B\mathbf{u}_{i}=0,\\
          \norm{C\mathbf{x}_k-\boldsymbol{\mu}(\mathbf{z}_k)}_{\Sigma^{-1}(\mathbf{z})}^2 \le \chi^2_{m}(\tau).
        \end{array}
\end{array}
\end{align}

After some algebraic manipulations and simplifications, the above program is equivalent to the following quadratically constrained basis pursuit problem: 
\begin{align}\label{eqn:enh_res_est_convex_1}
\begin{array}{ll}
\Minimize & \norm{\mathbf{y}_{(T)}-H_{(T)}\mathbf{u}_{(T-1)}-\Phi_{(T)}\mathbf{x} }_1 \\
\SubjectTo&\\
        &\norm{\Phi_T\mathbf{x}+H_T\mathbf{u}_{(T-1)}-\boldsymbol{\mu}(\mathbf{z_k})}_{\Sigma^{-1}(\mathbf{z}_k)}^2\le \chi^2_{m}(\tau),
\end{array}
\end{align}
where
\begin{align*}
\mathbf{y}_{(T)} = \left[\begin{array}{c} \mathbf{y}_{k-T+1}\\ \mathbf{y}_{k-T+2} \\ \vdots \\  \mathbf{y}_k \end{array}\right] \in \mathbb{R}^{m T},
\end{align*}
\begin{align*}
\mathbf{u}_{(T-1)} = \left[\begin{array}{c} \mathbf{u}_{k-T+1}\\ \mathbf{u}_{k-T+2} \\ \vdots \\  \mathbf{u}_{k-1} \end{array}\right] \in \mathbb{R}^{l (T-1)},
\end{align*}
where $\Phi_{(T)}$ is defined as a results of \eqref{eqn:decoder_var_attack},
\begin{align*}
H_{(T)}  = \left[\begin{array}{cccc}0&0&\hdots&0\\CB&0&\hdots&0\\CAB&CB&\hdots&0\\\vdots&\vdots&&\vdots\\CA^{T-2}B&CA^{T-3}B&\hdots&CB  \end{array}\right] \\ \in \mathbb{R}^{mT \times l(T-1)},
\end{align*}
and $\Phi_T, H_T$ are the last $m$ rows of $\Phi_{(T)}, H_{(T)}$ respectively. In the above formulation, the solution of the optimization problem gives an estimate $\hat{\mathbf{x}}_{k-T+1}$ of the state vector $\mathbf{x}_{k-T+1}$ which is then propagated forward to obtain an estimate of the current state using the physics-based dynamical model as follows
\begin{align}
    \hat{\mathbf{x}}_k = A^{T-1}\hat{\mathbf{x}}_{k-T+1} + G\mathbf{u}_{(T-1)},
\end{align}
where
\begin{align*}
    G = \left[\begin{array}{cccc}A^{T-2}B&A^{T-3}B&\hdots&B\end{array}\right].
\end{align*}

Suppose receding horizon $T$ is chosen big enough (i.e $T\ge n$) and the pair $(A,C)$ is observable, then there exists a matrix $F_{(T)}$ such that $F_{(T)}\Phi_{(T)} = 0.$\footnote{Let the singular value decomposition of $\Phi_{(T)}$ be given by
\[
    \Phi_{(T)} = \left[\begin{array}{c|c}U_1&U_2\end{array}\right]\left[\begin{array}{@{}c}\begin{smallmatrix}\sigma_1&&&\\&\sigma_2&&\\&&\ddots&\\&&&\sigma_n\end{smallmatrix}\\\hline\\\text{\huge{0}}\end{array}\right]V^\top,
\]
Then $F_{(T)} = U_2^\top$ is an example of such matrix.
}
Consequently, the optimization problem above is equivalent to
\begin{align}\label{eqn:enh_res_est_convex_2}
\begin{array}{ll}
\Minimize & \norm{ \mathbf{e} }_1 \\
\SubjectTo&\\
        &\mathbf{f}_{(T)} = F_{(T)}\mathbf{e}\\
        &\norm{\mathbf{y}_{T}+\mathbf{e}_T-\boldsymbol{\mu}(\mathbf{z_k})}_{\Sigma^{-1}(\mathbf{z}_k)}^2\le \chi^2_{m}(\tau),
\end{array}
\end{align}
where 
\begin{align*}
    \mathbf{f}_{(T)} = F_{(T)}\left(\mathbf{y}_{(T)} - H\mathbf{u}_{(T-1)}\right),
\end{align*}
and $\mathbf{e}_T, \mathbf{y}_T\in\mathbb{R}^m$ is the vector containing only the last $m$ elements of the respective vectors $\mathbf{e},\mathbf{y}$ in  order. While the form given in \eqref{eqn:enh_res_est_convex_1} is more intuitive and implemented for the simulation results, the form in \eqref{eqn:enh_res_est_convex_2} is more suitable to proof the main result which is given next. 

\begin{theorem}\label{thm:thm2}
Given a dataset $\mathcal{D} = \left\{\mathbf{Z},\mathbf{Y}\right\}$ containing historical auxiliary variables $\mathbf{Z}\in\mathbb{R}^{p\times T}$ and corresponding sensor measurements $\mathbf{Y}\in\mathbb{R}^{m\times T}$. Suppose that the latent sensor measurement satisfies the data-driven GPR prior given in \eqref{eqn:data_driven} and that there exists $\tau\in(0,\hspace{2mm}1)$ such that the true measurement $\mathbf{y}_k^*$ satisfies $p(\mathbf{y_k}^*|\mathbf{z}_k,\mathcal{D})\ge\tau$. Consider the convex optimization problem in \eqref{eqn:enh_res_est_convex_1}. Let $\hat{\mathbf{e}}$ be the solution of the equivalent form in \eqref{eqn:enh_res_est_convex_2}. If $\displaystyle \delta_{2s}(F_{(T)})<\frac{1}{\sqrt{2}}$, then, for any feasible sparse vector $\mathbf{e}$, 
\begin{align}
    \norm{\hat{\mathbf{e}}_T-\mathbf{e}_T}_2\le K_1 \textsf{sat}_{1}\left(K_2\norm{\mathbf{e}-\mathbf{e}[s]}_2\right),
\end{align}
where
\begin{align*}
    K_1 &= \sqrt{2\chi^2_{m}(\tau)\overline{\sigma}(\mathbf{z}_k)}\\
    K_2 &= K_3\sqrt{\frac{m-s}{2\chi^2_{m}(\tau)\overline{\sigma}(\mathbf{z}_k)}},
\end{align*}
with
\begin{align*}
    K_3 = \frac{2}{\sqrt{s}}\left(\frac{\delta_{2s}+\sqrt{\delta_{2s}\left(\frac{1}{\sqrt{2}}-\delta_{2s}\right)}}{\sqrt{2}\left(\frac{1}{\sqrt{2}}-\delta_{2s}\right)}+1\right)
\end{align*}
and $\overline{\sigma}(\mathbf{z}_k)$ is the biggest singular value of $\Sigma(
\mathbf{z}_k)$.
\end{theorem}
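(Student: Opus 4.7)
The plan is to derive two complementary upper bounds on $\norm{\hat{\mathbf{e}}_T-\mathbf{e}_T}_2$ and stitch them via the identity $\min(a,b)=b\cdot\textsf{sat}_1(a/b)$ for $a,b\ge 0$, which exactly matches the saturation structure appearing in the claim. The first, ``linear'' bound will come from the RIP result of Theorem~\ref{thm:RIP}, which I can invoke after verifying that its hypotheses carry over to the tighter program~\eqref{eqn:enh_res_est_convex_2}: since the comparison vector $\mathbf{e}$ is assumed feasible, $F_{(T)}\mathbf{e}=\mathbf{f}_{(T)}=F_{(T)}\hat{\mathbf{e}}$ places the error $\hat{\mathbf{e}}-\mathbf{e}$ in $\ker(F_{(T)})$, and the $\ell_1$-optimality of $\hat{\mathbf{e}}$ over the (smaller) constrained feasible set still gives $\norm{\hat{\mathbf{e}}}_1\le\norm{\mathbf{e}}_1$ because $\mathbf{e}$ itself is feasible for that set. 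These are exactly the two ingredients used in the Candes--Cai--Zhang argument, so the assumption $\delta_{2s}(F_{(T)})<1/\sqrt{2}$ yields $\norm{\hat{\mathbf{e}}-\mathbf{e}}_2 \le K_3\,\norm{\mathbf{e}-\mathbf{e}[s]}_1$.

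To turn this into the linear branch of the saturated bound, I will use (i) the subvector estimate $\norm{\hat{\mathbf{e}}_T-\mathbf{e}_T}_2\le\norm{\hat{\mathbf{e}}-\mathbf{e}}_2$, and (ii) the Cauchy--Schwarz inequality $\norm{\mathbf{e}-\mathbf{e}[s]}_1\le\sqrt{m-s}\,\norm{\mathbf{e}-\mathbf{e}[s]}_2$, valid because $\mathbf{e}-\mathbf{e}[s]$ is supported on at most $m-s$ coordinates; multiplying through produces $\norm{\hat{\mathbf{e}}_T-\mathbf{e}_T}_2\le K_3\sqrt{m-s}\,\norm{\mathbf{e}-\mathbf{e}[s]}_2$, which equals $K_1K_2\norm{\mathbf{e}-\mathbf{e}[s]}_2$ by construction of the constants. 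The second, ``constant'' branch comes from the GPR-induced quadratic constraint: both $\mathbf{y}_T+\hat{\mathbf{e}}_T$ and $\mathbf{y}_T+\mathbf{e}_T$ lie in the ellipsoid $\{z:\norm{z-\boldsymbol{\mu}(\mathbf{z}_k)}^2_{\Sigma^{-1}(\mathbf{z}_k)}\le\chi^2_m(\tau)\}$, so a triangle-inequality argument in the $\Sigma^{-1}$-weighted norm, combined with the conversion $\norm{v}_2^2\le\overline{\sigma}(\mathbf{z}_k)\norm{v}_{\Sigma^{-1}(\mathbf{z}_k)}^2$, bounds $\norm{\hat{\mathbf{e}}_T-\mathbf{e}_T}_2$ by $K_1$. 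Taking the minimum of the two branches and rewriting it via the saturation identity finishes the proof.

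The main obstacle is the first step: justifying that the extra quadratic GPR constraint in \eqref{eqn:enh_res_est_convex_2} does not invalidate the recovery bound of Theorem~\ref{thm:RIP}, which was stated for the unconstrained basis pursuit program \eqref{eqn:comp_sens_L1}. The subtlety is resolved by observing that Theorem~\ref{thm:RIP}'s proof only exploits the null-space membership of the error and the $\ell_1$-optimality inequality $\norm{\hat{\mathbf{e}}}_1\le\norm{\mathbf{e}}_1$, both of which survive once the comparison vector $\mathbf{e}$ is required to be feasible for the tighter program --- precisely the hypothesis of Theorem~\ref{thm:thm2}. The remaining work of tracking norm-conversion constants and verifying that $K_1K_2 = K_3\sqrt{m-s}$ is then routine bookkeeping.
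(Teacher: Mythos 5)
Your proposal follows essentially the same route as the paper's proof: a RIP-based bound on the full composite error via Theorem~\ref{thm:RIP}, restriction to the last block, conversion of the $\ell_1$ tail to an $\ell_2$ tail by Cauchy--Schwarz, a second bound extracted from the GPR quadratic constraint, and the combination of the two branches through $\min(a,b)=b\,\textsf{sat}_1(a/b)$. Your justification for why the recovery bound survives the extra quadratic constraint --- $\hat{\mathbf{e}}-\mathbf{e}\in\ker F_{(T)}$ and $\norm{\hat{\mathbf{e}}}_1\le\norm{\mathbf{e}}_1$ because the comparison vector $\mathbf{e}$ is itself feasible for the tighter program --- is in fact more explicit than the paper's, which only remarks that the constrained feasible set lies inside the pre-image of $\mathbf{f}_{(T)}$ under $F_{(T)}$.

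The one place your sketch does not deliver the stated constants is the ``constant'' branch. With the norm $\mathbf{v}\mapsto\bigl(\mathbf{v}^\top\Sigma^{-1}(\mathbf{z}_k)\mathbf{v}\bigr)^{1/2}$, the plain triangle inequality applied to the two ellipsoid memberships gives $\norm{\hat{\mathbf{e}}_T-\mathbf{e}_T}_2\le 2\sqrt{\chi^2_m(\tau)\,\overline{\sigma}(\mathbf{z}_k)}=\sqrt{2}\,K_1$, not $K_1$; so as written your argument proves the theorem with $K_1$ replaced by $\sqrt{2}K_1$ (and $K_2$ by $K_2/\sqrt{2}$). The paper obtains the sharper $2\chi^2_m(\tau)$ by applying a reverse-triangle-type step to \emph{squared} weighted norms, effectively using $\norm{a}^2-\norm{b}^2\le\norm{a+b}^2$, which is not a valid inequality (the cross term is dropped), so the $\sqrt{2}K_1$ your route yields is arguably the constant this style of argument actually supports. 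Separately, note that both you and the paper invoke $\norm{\mathbf{e}-\mathbf{e}[s]}_1\le\sqrt{m-s}\,\norm{\mathbf{e}-\mathbf{e}[s]}_2$; since the composite error in \eqref{eqn:enh_res_est_convex_2} lives in $\mathbb{R}^{mT}$, the correct support count is $mT-s$ unless additional structure on the attack support is assumed, so this step inherits the same imprecision as the paper rather than introducing a new one.
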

\begin{proof}
Following the development in earlier part of this section, it is straightforward to see that $p(\mathbf{y_k}^*|\mathbf{z}_k,\mathcal{D})\ge\tau$ implies that $\norm{\mathbf{y}_{T}+\mathbf{e}_T-\boldsymbol{\mu}(\mathbf{z_k})}_{\Sigma^{-1}(\mathbf{z}_k)}^2\le \chi^2_{m}(\tau)$ for any composite measurement vector $\mathbf{y}_{T}$ corrupted by the composite sparse signal $\mathbf{e}_T$. In order words, the inequality holds for all sparse signal $\mathbf{e}_T = \left[\mathbf{e}_{k-T+1}^\top\hdots\mathbf{e}_k^\top\right]^\top$, with each $\mathbf{e}_k\in\mathcal{S}_s$ satisfying $\mathbf{y}_k = \mathbf{y}_k^* + \mathbf{e}_k$. More, there exists the \emph{true state vector} $\mathbf{x}_k^*\in\mathbb{R}^n$ such that $\mathbf{y}_k^* = C\mathbf{x}_k^*$. This implies that
\begin{align*}
    \mathbf{y}_{(T)} &= \mathbf{y}_{(T)}^* + \mathbf{e}_{(T)}\\
                     &= \Phi_{(T)}\mathbf{x}_{k-T+1} + H_{(T)}\mathbf{u}_{(T-1)} + \mathbf{e}_{(T)}.
\end{align*}
From which it follows that 
\begin{align*}
    \mathbf{f}_{(T)} &= F_{(T)}\left(\mathbf{y}_{(T)} - H\mathbf{u}_{(T-1)}\right)\\
    &= F_{(T)}\mathbf{e}_{(T)}.
\end{align*}
Thus for any $\mathbf{y}_{(T)}$, the set of all $\mathbf{e}_{(T)}$ for which the quadratic inequality holds is a subset of the pre-image of $\mathbf{f}_{(T)}$ under the linear transformation $F_{(T)}$. Thus, using Lemma~\ref{thm:RIP}, the optimal point $\hat{\mathbf{e}}_{(T)}$ of the problem in \eqref{eqn:enh_res_est_convex_2} satisfies

\begin{align*}
    \norm{\hat{\mathbf{e}}_{(T)}-\mathbf{e}_{(T)}}_2&\le K_3\norm{\mathbf{e}_{(T)}-\mathbf{e}_{(T)}[s]}_1\\
    &\le K_3\sqrt{m-s}\norm{\mathbf{e}_{(T)}-\mathbf{e}_{(T)}[s]}_2.
\end{align*}
Thus,
\begin{align}\nonumber
    \norm{\hat{\mathbf{e}}_{T}-\mathbf{e}_{T}}_2&\le \norm{\mathbf{e}_{(T)}-\mathbf{e}_{(T)}}_2\\\label{eqn:proof_thm2_1}
    &\le K_3\sqrt{m-s}\norm{\mathbf{e}_{(T)}-\mathbf{e}_{(T)}[s]}_2
\end{align}
Moreover, adding and subtracting $\hat{\mathbf{e}}_T$ in the quadratic inequality constraint and using the left-hand-side triangular inequality  yields the following sequence of inequalities;
\begin{align*}
    \norm{\mathbf{y}_{T}+\hat{\mathbf{e}}_T -\boldsymbol{\mu}(\mathbf{z_k}) - \hat{\mathbf{e}}_T + \mathbf{e}_T}_{\Sigma^{-1}(\mathbf{z}_k)}^2\le \chi^2_{m}(\tau)
\end{align*}
\begin{align*}
    \norm{\hat{\mathbf{e}}_T - \mathbf{e}_T}_{\Sigma^{-1}(\mathbf{z}_k)}^2 - \norm{\mathbf{y}_{T}+\hat{\mathbf{e}}_T -\boldsymbol{\mu}(\mathbf{z_k}) }_{\Sigma^{-1}(\mathbf{z}_k)}^2\le \chi^2_{m}(\tau)
\end{align*}
\begin{align*}
    \norm{\hat{\mathbf{e}}_T - \mathbf{e}_T}_{\Sigma^{-1}(\mathbf{z}_k)}^2\le 2\chi^2_{m}(\tau)
\end{align*}
which implies that
\begin{align}\label{eqn:proof_thm2_2}
    \norm{\hat{\mathbf{e}}_T - \mathbf{e}_T}_2\le \sqrt{2\chi^2_{m}(\tau)\overline{\sigma}(\mathbf{z}_k)}.
\end{align}

Now, comparing \eqref{eqn:proof_thm2_1} and \eqref{eqn:proof_thm2_2} yields
\begin{align*}
    \norm{\hat{\mathbf{e}}_T - \mathbf{e}_T}_2&\le\min\left\{\begin{array}{c}K_3\sqrt{m-s}\norm{\mathbf{e}_{(T)}-\mathbf{e}_{(T)}[s]}_2,\\\\ \sqrt{2\chi^2_{m}(\tau)\overline{\sigma}(\mathbf{z}_k)}\end{array}\right\}\\\\
    &\le K_1\min\left\{K_2\norm{\mathbf{e}_{(T)}-\mathbf{e}_{(T)}[s]}_2,1\right\}\\\\
    &\le K_1\textsf{sat}_1\left(K_2\norm{\mathbf{e}_{(T)}-\mathbf{e}_{(T)}[s]}_2\right)
\end{align*}
\end{proof}

\section{Simulation Results}\label{s:experiments}
The attack-resilient observer proposed in this paper is evaluated using a numerical simulation of the IEEE 14-bus system shown in Fig. \ref{fig:IEEE14Bus}, it has $n_b = 14$ buses and $n_g = 5$ generators.
It is expected that each bus in the network has IIoT measurement devices able to provide active power injections and flow measurements.

\subsection{Model Description}
A small signal model is derived by linearizing the generator swing equations and power flow equations around the operating points under the assumption that:
\begin{itemize}
\item Voltage is tightly controlled at their nominal value;
\item Angular difference between each bus is small;
\item Conductance is negligible therefore the system is lossless.
\end{itemize}

\begin{figure}
    \centering{\includegraphics[width=3.3in]{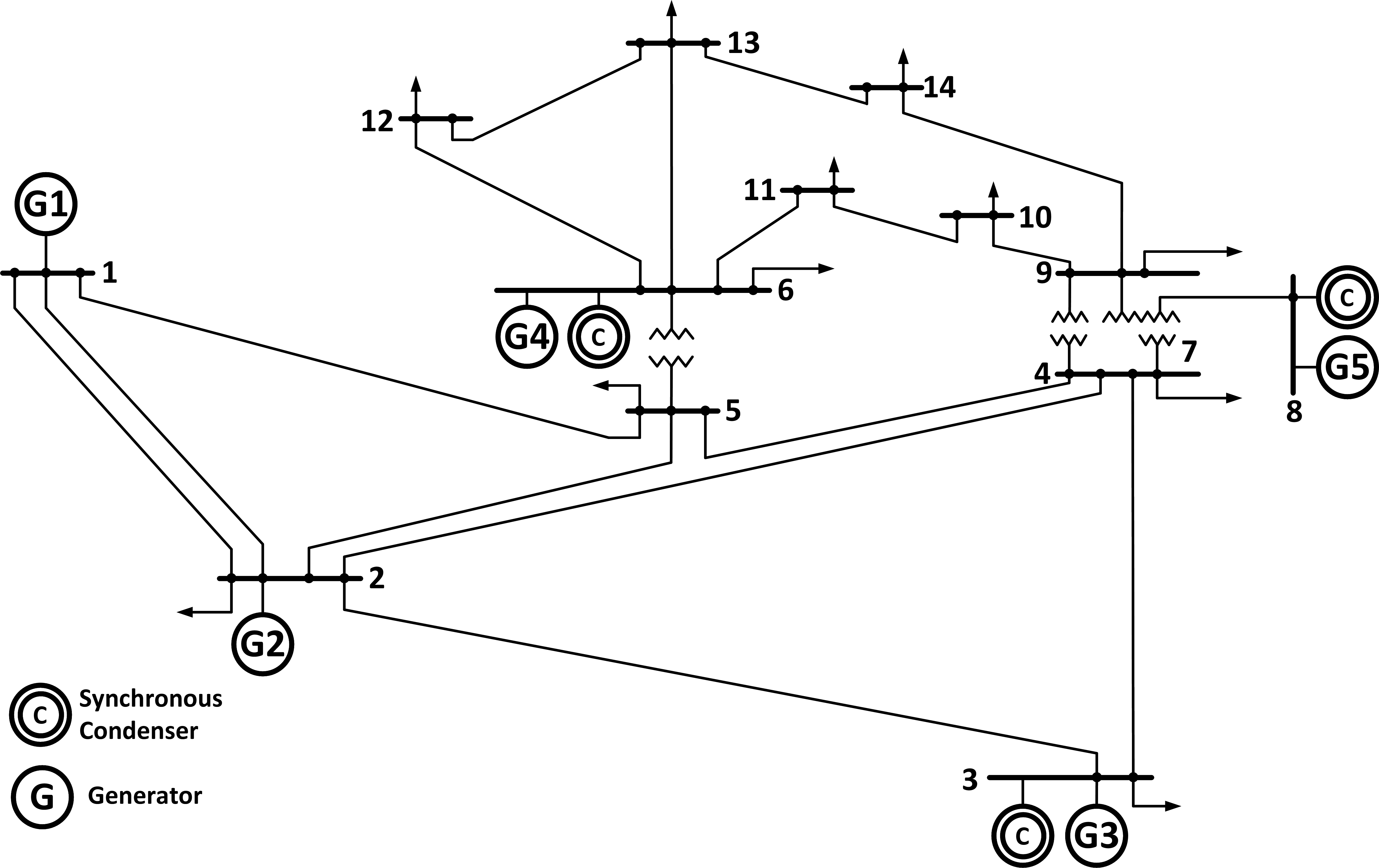}}
    \vspace{-2mm}
    \caption{IEEE 14-bus system.}
    \vspace{-2mm}
    \label{fig:IEEE14Bus}
\end{figure}

\noindent
Furthermore, the buses are ordered so that the first buses are generators, then the admittance-weighted \emph{Laplacian matrix} is expressed as
$L = \left[\begin{smallmatrix} L_{gg} & L_{lg}\\ L_{gl} & L_{ll} \end{smallmatrix}\right] \in \mathbb{R}^{N \times N}$, where $N = n_g+n_b$.
Thus, allowing the system to be described by the dynamic linearized swing equations and the algebraic DC power flow equations in the following manner:
\begin{equation}
    \begin{aligned}
    \begin{bmatrix}
    I & 0 & 0 \\
    0 & M & 0 \\
    0 & 0 & 0
    \end{bmatrix}
    \dot{x} &= -
    \begin{bmatrix}
    0 & -I & 0 \\
    L_{gg} & D_g & L_{lg} \\
    L_{gl} & 0 & L_{ll}
    \end{bmatrix}
    x +
    \begin{bmatrix}
    0 & 0 \\
    I & 0 \\
    0 & I
    \end{bmatrix}
    u
    \end{aligned}
\label{eqn:ieee14_system}
\end{equation}
The state variables, $x = [\delta^\top \ \omega^\top \ \theta^\top]^\top \in \mathbb{R}^{2n_g+n_b}$, consist of $\delta \in \mathbb{R}^{n_g}$ the generator rotor angle, $\omega \in \mathbb{R}^{n_g}$ the generator frequency and $\theta \in \mathbb{R}^{n_b}$ the voltage bus angles.
The control input $u = [P_g^\top \ P_d^\top ]^\top \in \mathbb{R}^{n_g+n_b}$ consists of $P_g \in \mathbb{R}^{n_g}$ the mechanical input power from each generator, controlled in a closed-loop manner with a PI regulating the generator frequency, and $P_d \in \mathbb{R}^{n_b}$ the active power demand at each bus.
Where $M$ is a diagonal matrix of inertial constants for each generator and $D_g$ is a diagonal matrix of damping coefficients.

For the system described in \eqref{eqn:ieee14_system}, the algebraic variable $\theta$ is eliminated to reduce the system dynamics to two state variables, $\tilde{x} = [\delta^\top \ \omega^\top]^\top \in \mathbb{R}^{2n_g}$, as follows:
\begin{equation}
    \begin{aligned}
        \begin{bmatrix}
            \dot{\delta}(t) \\
            \dot{\omega}(t) \\
        \end{bmatrix}
        = &
        \begin{bmatrix}
            0 & I \\
            -M^{-1}(L_{gg}-L_{gl} L_{ll}^{-1} L_{lg}) & -M^{-1} D_g\\
        \end{bmatrix}
        \tilde{x}
        \\
         &\hspace{1cm}+
        \begin{bmatrix}
            0 & 0 \\
            M^{-1} & -M^{-1}L_{gl}L_{ll}^{-1} \\
        \end{bmatrix}
        u,\\\\
        y(t) = &
        \begin{bmatrix}
            0 & I \\
            -P_{\text{node}} L_{ll}^{-1} L_{lg} & 0 \\
        \end{bmatrix}
        \tilde{x}
        + 
        \begin{bmatrix}
            0 & 0 \\
            -P_{\text{node}} L_{ll}^{-1} & 0 \\
        \end{bmatrix}
        u
    \end{aligned}
    \label{eqn:ieee14_system_obsv}
\end{equation}
Where $P_\text{node}$ is a function of the system incidence and susceptance matrices, obtained by linearizing the active power injections at the buses\cite{scholtz2004observer}. Consequently, the bus angles vector $\theta$(t) is given by:
\begin{equation*}
    \theta (t) = -L_{ll}^{-1}(L_{lg}\delta(t) - P_d).
\end{equation*}
The measurement channels $y(t) = [\omega^\top \ P_\text{net}^\top]^\top \in \mathbb{R}^{n_g+n_b}$ contain $\omega$ the generator frequency used in the PI feedback loop and $P_\text{net}$ the net power injected at each bus. 

\begin{figure}
    \centering{\includegraphics[scale = 0.65]{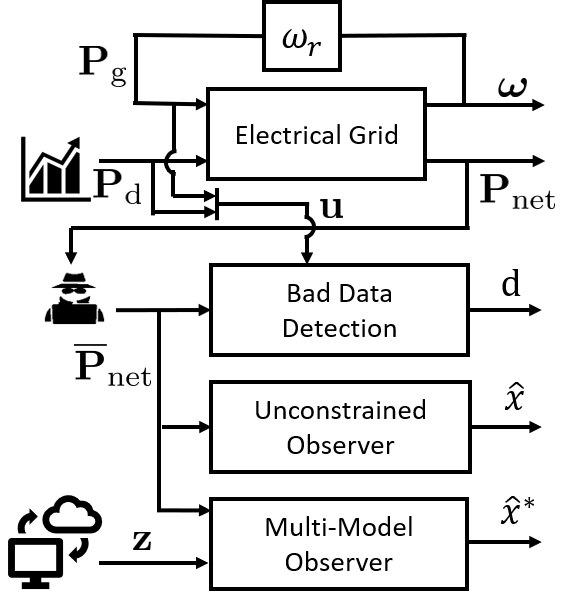}}
    \caption{Block diagram depiction of the simulation scenario.}
    \label{fig:SystemBlockDiagram}
\end{figure}
\subsection{Auxiliary Model}
In our earlier work \cite{anubi2019enhanced,anubi2019resilient}, we used actual data collected from the New York independent system operator (NYISO) to build gaussian process regression (GPR) models which map from market variables (namely, \texttt{locational bus marginal prices, marginal cost loses and marginal cost congestion}) to load data (namely, \texttt{active and reactive power}). The trained GPR model is executed to give the mean $\boldsymbol{\mu}(\mathbf{z})$ and the covariance $\Sigma(\mathbf{z})$ of the data-driven auxiliary model. For the scenario, presented here, we used the obtained covariance matrix from the earlier work to locate $\boldsymbol{\mu}(\mathbf{z}_k)$ within $3$ standard deviations of the true values in the simulation. This allows us to compensate for lack of data in this particular case by recreating a typical auxiliary model from known instances.

\subsection{Simulation Process}
The system in \eqref{eqn:ieee14_system_obsv} is discretized for the implementation of optimization problem in \eqref{eqn:enh_res_est_convex_1}. As depicted in the simulation scenario shown in Fig.~\ref{fig:SystemBlockDiagram}, $\overline{P}_\text{net}$ represents a compromised measurement. The goal of the observer design is to give the correct estimate of the rotor angle $\delta(t)$ under FDIA. We also implemented a residue-based bad data detection \cite{koglin1990bad} to monitor the integrity of the FDIA used in the simulation.

\subsection{Results}
We implemented three observer schemes; (i) a discretized Luenberger Observer, (ii) an Unconstrained $l_1$ minimization based Observer based on the results presented in the preliminary section, and (iii) a constrained Multi-model Observer implementing the optimization problem given in \eqref{eqn:enh_res_est_convex_1}, using the auxiliary model as described above and in Fig.~\ref{fig:SystemBlockDiagram}.

Fig.~\ref{fig:LO}, Fig.~\ref{fig:L1O} and Fig.~\ref{fig:MMO} shows the results of each observer when subjected to attacks on $30\%$ of the available measurement. The attack was triggered after $200$ samples into into the simulation. This is done to show the observers performances before and after attacks. 

Fig. \ref{fig:BDD} shows the integrity of the FDIA used in the simulation. As seen in the figure, the attacked measurement is able to pass a residue-based bad data detection (BDD) test. The threshold used in the simulation is $0.05$ and is assumed to be known to an attacker. Different threshold values were used and the FDIA passed the BDD test for all of them. We only show one plot here due to space limitation.

To further clarify the performances depicted in the figures above, we present in Table \ref{table:error_table} two metrics to give numerical comparison of the performance of each observer. The first metric is the root-mean-square (RMS) value, which quantifies the \emph{energy} of the errors between the actual value and estimated values. The second metric is the maximum absolute value of the error. This helps us capture the worst-case performance of each of the observers compared. It is seen also that, in both metrics considered, the proposed multi-model observer outperforms the other two.

\begin{figure}
    \centering{\includegraphics[width=3.36in]{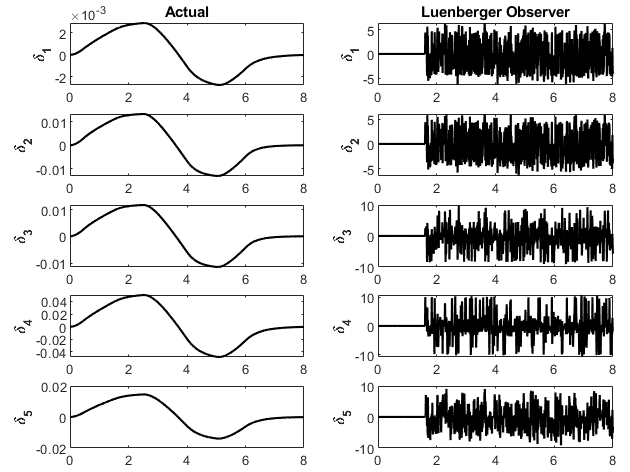}}
    \caption{A comparison of the actual $\delta(t)$ and the observed $\delta(t)$ by a discretized Luenberger Observer. It is seen that any attacks on the measurement channels leads to an inability to reconstruct the real states.}
    \label{fig:LO}
\end{figure}
\begin{figure}
    \centering{\includegraphics[width=3.36in]{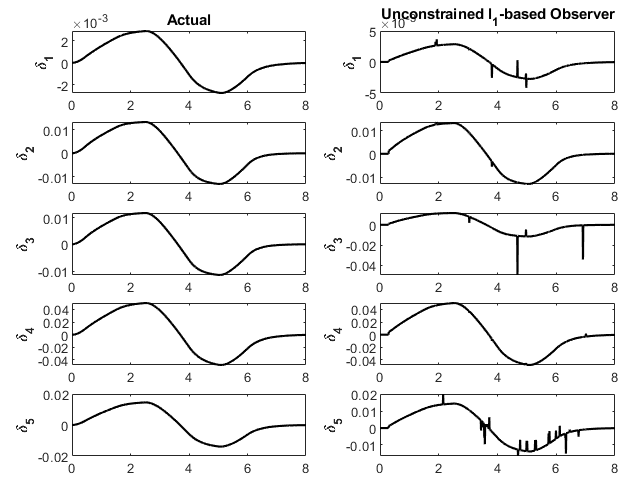}}
    \caption{A comparison of the actual $\delta(t)$ and the observed $\delta(t)$ by the Unconstrained ${l}_1$-minimization-based Observer. Although, this observer is able to reconstruct most of the signals, there are still outliers that could cause instability if used as a feedback to a controller.}
    \label{fig:L1O}
\end{figure}    
\begin{figure}
    \centering{\includegraphics[width=3.36in]{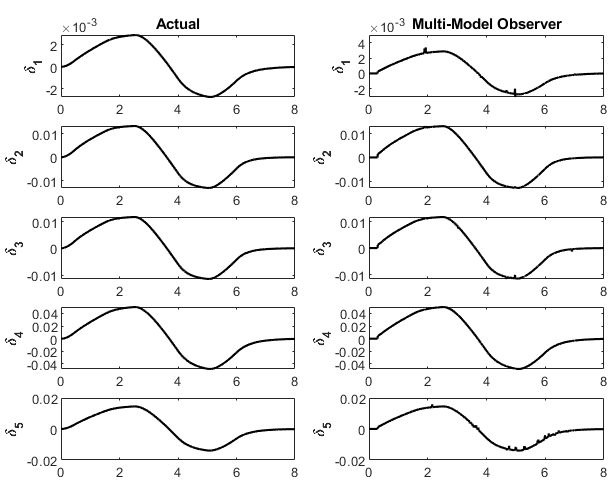}}
    \caption{A comparison of the actual $\delta(t)$ and the observed $\delta(t)$ by the proposed Multi-Model Observer. The proposed observer is able reconstruct the $\delta(t)$ to within a much more reasonable degree of accuracy. This is as a result of the additional information given through the constraint from the auxiliary model.}
    \label{fig:MMO}
\end{figure}
\begin{figure}
    \centering{\includegraphics[width=3.36in]{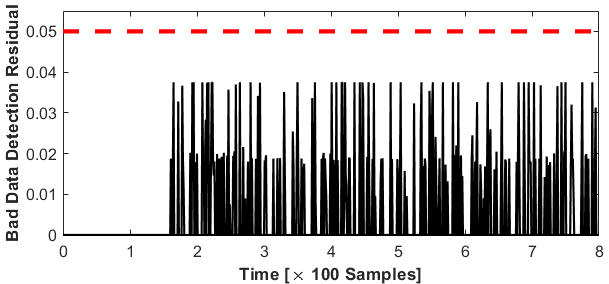}}
    \caption{A FDIA is undetectable by a Bad Data Detector that is set to a $5\%$ threshold}
    \label{fig:BDD}
\end{figure}

\begin{table}[H]
\centering
\caption{Error Metric Values}
\begin{tabular}{ |c|c|c|c|c|c|c| }
 \hline
 &\multicolumn{3}{|c|}{\textbf{RMS metric}}&\multicolumn{3}{|c|}{\textbf{Max. Abs. metric}}\\
 \hline
 &\textbf{LO}&\textbf{L1O} &\textbf{MMO}&\textbf{LO}&\textbf{L1O} &\textbf{MMO}\\
 \hline
 $\delta_1$&2.8801 & 0.0001 &	0.0001 & 6.4274 & 0.0028 &	0.0007\\
 $\delta_2$&2.7967 & 0.0002	& 0.0001 &	6.4437 & 0.0022 &	0.0013\\
 $\delta_3$&3.2746 & 0.0018	& 0.0001 &	9.7444 & 0.0387 &	0.0013\\
 $\delta_4$&3.4786 & 0.0004	& 0.0004 &	10.7019 & 0.0048 &	0.0042\\
 $\delta_5$&3.329 &	0.0011	& 0.0003 &	9.1387 & 0.0121 &	0.0024\\
 \hline
 \multicolumn{7}{|l|}{\textbf{LO:} Luenberger Observer, \textbf{L1O}: Unconstrained $\ell_1$-based Observer}\\
 \multicolumn{7}{|l|}{\textbf{MMO}: Proposed Multi-Model Observer}\\
 \hline
\end{tabular}
\label{table:error_table}
\end{table}

\section{Conclusions}\label{s:conclusions}
In this paper, a constrained optimization based resilient state observer is developed using $l_1$ minimization scheme.
The novelty of the algorithm lies in its ability to take into account the machine learning model as a constraint.
This constraint, the physics-based model and estimation theory is what makes this multi-model observer resilient.
The developed algorithm is evaluated through a numerical example of IEEE-14 bus system.
Under FDIA, state measurements differ from their true state.
By incorporating the resilient observer the FDIAs can be neutralized and true states can be retrieved with further accuracy.

Some of the problems open for future work include observing the behaviour of resilient observer as filter by cascading it in closed loop with the controller. Considering more complicated constraints for reconstruction optimization problem, rather than a simple quadratic constraint. We would also study the behavior of resilient observer under FDIA and model uncertainty, and extend our approach to other CPS.

\bibliographystyle{IEEEtran}
\bibliography{refs}

\begin{thebibliography}{10}
\providecommand{\url}[1]{#1}
\csname url@samestyle\endcsname
\providecommand{\newblock}{\relax}
\providecommand{\bibinfo}[2]{#2}
\providecommand{\BIBentrySTDinterwordspacing}{\spaceskip=0pt\relax}
\providecommand{\BIBentryALTinterwordstretchfactor}{4}
\providecommand{\BIBentryALTinterwordspacing}{\spaceskip=\fontdimen2\font plus
\BIBentryALTinterwordstretchfactor\fontdimen3\font minus
  \fontdimen4\font\relax}
\providecommand{\BIBforeignlanguage}[2]{{%
\expandafter\ifx\csname l@#1\endcsname\relax
\typeout{** WARNING: IEEEtran.bst: No hyphenation pattern has been}%
\typeout{** loaded for the language `#1'. Using the pattern for}%
\typeout{** the default language instead.}%
\else
\language=\csname l@#1\endcsname
\fi
#2}}
\providecommand{\BIBdecl}{\relax}
\BIBdecl

\bibitem{konstantinou2015cyber}
C.~Konstantinou \emph{et~al.}, ``Cyber-physical systems: A security
  perspective,'' in \emph{20th IEEE European Test Symposium (ETS)}.\hskip 1em
  plus 0.5em minus 0.4em\relax IEEE, 2015, pp. 1--8.

\bibitem{wu2018bad}
Y.~Wu \emph{et~al.}, ``Bad data detection using linear {WLS} and sampled values
  in digital substations,'' \emph{IEEE Transactions on Power Delivery},
  vol.~33, no.~1, pp. 150--157, 2018.

\bibitem{liu2011false}
Y.~Liu, P.~Ning, and M.~K. Reiter, ``False data injection attacks against state
  estimation in electric power grids,'' \emph{ACM Transactions on Information
  and System Security (TISSEC)}, vol.~14, no.~1, p.~13, 2011.

\bibitem{liang2017review}
G.~Liang \emph{et~al.}, ``A review of false data injection attacks against
  modern power systems,'' \emph{IEEE Transactions on Smart Grid}, vol.~8,
  no.~4, pp. 1630--1638, 2017.

\bibitem{deng2017false}
R.~Deng \emph{et~al.}, ``False data injection on state estimation in power
  systems -- attacks, impacts, and defense: A survey,'' \emph{IEEE Transactions
  on Industrial Informatics}, vol.~13, no.~2, pp. 411--423, 2017.

\bibitem{hao2015sparse}
J.~Hao \emph{et~al.}, ``Sparse malicious false data injection attacks and
  defense mechanisms in smart grids,'' \emph{IEEE Transactions on Industrial
  Informatics}, vol.~11, no.~5, pp. 1--12, 2015.

\bibitem{konstantinou2016case}
C.~Konstantinou and M.~Maniatakos, ``A case study on implementing false data
  injection attacks against nonlinear state estimation,'' in \emph{Proceedings
  of the 2nd ACM Workshop on Cyber-Physical Systems Security and
  Privacy}.\hskip 1em plus 0.5em minus 0.4em\relax ACM, 2016, pp. 81--92.

\bibitem{konstantinou2017gps}
C.~Konstantinou \emph{et~al.}, ``Gps spoofing effect on phase angle monitoring
  and control in a real-time digital simulator-based hardware-in-the-loop
  environment,'' \emph{IET Cyber-Physical Systems: Theory \& Applications},
  vol.~2, no.~4, pp. 180--187, 2017.

\bibitem{9087789}
A.~{Sayghe}, O.~M. {Anubi}, and C.~{Konstantinou}, ``Adversarial examples on
  power systems state estimation,'' in \emph{2020 IEEE Power Energy Society
  Innovative Smart Grid Technologies Conference (ISGT)}, 2020, pp. 1--5.

\bibitem{ashok2018online}
A.~Ashok, M.~Govindarasu, and V.~Ajjarapu, ``Online detection of stealthy false
  data injection attacks in power system state estimation,'' \emph{IEEE
  Transactions on Smart Grid}, vol.~9, no.~3, pp. 1636--1646, 2018.

\bibitem{anubi2019enhanced}
O.~M. {Anubi} and C.~{Konstantinou}, ``Enhanced resilient state estimation
  using data-driven auxiliary models,'' \emph{IEEE Transactions on Industrial
  Informatics}, vol.~16, no.~1, pp. 639--647, Jan 2020.

\bibitem{fawzi2014secure}
H.~Fawzi, P.~Tabuada, and S.~Diggavi, ``Secure estimation and control for
  cyber-physical systems under adversarial attacks,'' \emph{IEEE Transactions
  on Automatic Control}, vol.~59, no.~6, pp. 1454--1467, 2014.

\bibitem{hu2016secure}
Q.~Hu \emph{et~al.}, ``Secure state estimation for nonlinear power systems
  under cyber attacks,'' \emph{arXiv preprint arXiv:1603.06894}, 2016.

\bibitem{Fiore2017Secure}
G.~Fiore \emph{et~al.}, ``Secure state estimation for cyber physical systems
  with sparse malicious packet drops,'' in \emph{American Control Conference
  (ACC), 2017}.\hskip 1em plus 0.5em minus 0.4em\relax IEEE, 2017, pp.
  1898--1903.

\bibitem{8894484}
C.~{Konstantinou} and M.~{Maniatakos}, ``A data-based detection method against
  false data injection attacks,'' \emph{IEEE Design Test}, pp. 1--1, 2019.

\bibitem{Mishra2015Secure}
S.~Mishra \emph{et~al.}, ``Secure state estimation: Optimal guarantees against
  sensor attacks in the presence of noise,'' in \emph{Information Theory
  (ISIT), 2015 IEEE International Symposium on}.\hskip 1em plus 0.5em minus
  0.4em\relax IEEE, 2015, pp. 2929--2933.

\bibitem{Liu2017Secure}
X.~Liu, Y.~Mo, and E.~Garone, ``Secure dynamic state estimation by decomposing
  {K}alman filter,'' \emph{IFAC-PapersOnLine}, vol.~50, no.~1, pp. 7351--7356,
  2017.

\bibitem{mestha2017cyber}
L.~K. Mestha, O.~M. Anubi, and M.~Abbaszadeh, ``Cyber-attack detection and
  accommodation algorithm for energy delivery systems,'' in \emph{Control
  Technology and Applications (CCTA), 2017 IEEE Conference on}.\hskip 1em plus
  0.5em minus 0.4em\relax IEEE, 2017, pp. 1326--1331.

\bibitem{anubi2018robust}
O.~M. Anubi, L.~Mestha, and H.~Achanta, ``Robust resilient signal
  reconstruction under adversarial attacks,'' \emph{arXiv preprint
  arXiv:1807.08004}, 2018.

\bibitem{candes2005decoding}
E.~J. Candes and T.~Tao, ``Decoding by linear programming,'' \emph{IEEE
  transactions on information theory}, vol.~51, no.~12, pp. 4203--4215, 2005.

\bibitem{hayden2016sparse}
D.~Hayden \emph{et~al.}, ``Sparse network identifiability via compressed
  sensing,'' \emph{Automatica}, vol.~68, pp. 9--17, 2016.

\bibitem{pajic2017design}
M.~Pajic \emph{et~al.}, ``Design and implementation of attack-resilient
  cyberphysical systems: With a focus on attack-resilient state estimators,''
  \emph{IEEE Control Systems}, vol.~37, no.~2, pp. 66--81, 2017.

\bibitem{donoho2003optimally}
D.~L. Donoho and M.~Elad, ``Optimally sparse representation in general
  (nonorthogonal) dictionaries via l1 minimization,'' \emph{Proceedings of the
  National Academy of Sciences}, vol. 100, no.~5, pp. 2197--2202, 2003.

\bibitem{elad2002generalized}
M.~Elad and A.~M. Bruckstein, ``A generalized uncertainty principle and sparse
  representation in pairs of bases,'' \emph{IEEE Transactions on Information
  Theory}, vol.~48, no.~9, pp. 2558--2567, 2002.

\bibitem{gribonval2003sparse}
R.~Gribonval and M.~Nielsen, ``Sparse representations in unions of bases,''
  \emph{IEEE transactions on Information theory}, vol.~49, no.~12, pp.
  3320--3325, 2003.

\bibitem{tropp2004greed}
J.~A. Tropp, ``Greed is good: Algorithmic results for sparse approximation,''
  \emph{IEEE Transactions on Information theory}, vol.~50, no.~10, pp.
  2231--2242, 2004.

\bibitem{cai2013sparse}
T.~T. Cai and A.~Zhang, ``Sparse representation of a polytope and recovery of
  sparse signals and low-rank matrices,'' \emph{IEEE transactions on
  information theory}, vol.~60, no.~1, pp. 122--132, 2013.

\bibitem{pasqualetti2013attack}
F.~Pasqualetti, F.~D{\"o}rfler, and F.~Bullo, ``Attack detection and
  identification in cyber-physical systems,'' \emph{IEEE Transactions on
  Automatic Control}, vol.~58, no.~11, pp. 2715--2729, 2013.

\bibitem{anubi2019resilient}
O.~M. Anubi, C.~Konstantinou, and R.~Roberts, ``Resilient optimal estimation
  using measurement prior,'' \emph{arXiv preprint arXiv:1907.13102}, 2019.

\bibitem{scholtz2004observer}
E.~Scholtz, ``Observer-based monitors and distributed wave controllers for
  electromechanical disturbances in power systems,'' Ph.D. dissertation,
  Massachusetts Institute of Technology, 2004.

\bibitem{koglin1990bad}
H.-J. Koglin, T.~Neisius, G.~Bei$\beta$ler, and K.~Schmitt, ``Bad data
  detection and identification,'' \emph{International Journal of Electrical
  Power \& Energy Systems}, vol.~12, no.~2, pp. 94--103, 1990.

\end{thebibliography}

\end{document}